\newtheorem{lemma}{Lemma}
\begin{document}

\begin{flushleft}
{\Large\bfseries Linial's Lower Bound Made Easy\par}

\bigskip
\textbf{Juhana Laurinharju}
$\cdot$
{\sffamily\small juhana.laurinharju@cs.helsinki.fi}

{\footnotesize
Helsinki Institute for Information Technology HIIT, \\
Department of Computer Science, University of Helsinki, Finland\par}

\medskip
\textbf{Jukka Suomela}
$\cdot$
{\sffamily\small jukka.suomela@aalto.fi}

{\footnotesize
Helsinki Institute for Information Technology HIIT, \\
Department of Information and Computer Science, Aalto University, Finland\par}
\end{flushleft}

\bigskip
\bigskip
\noindent\textbf{Abstract.}
Linial's seminal result shows that any deterministic distributed algorithm that finds a $3$-colouring of an $n$-cycle requires at least $\log^*(n)/2 - 1$ communication rounds. We give a new simpler proof of this theorem.

\medskip

\section{Introduction}

Linial's lower bound for $3$-colouring directed cycles~\cite{linial92locality} is one of the most celebrated results in the area of distributed graph algorithms. It is cited in hundreds of papers and the proof has been reproduced in textbooks and lecture notes \cite{peleg00distributed,barenboim13distributed,wattenhofer13lecture,suomela-ddabook}. Yet it it seems that typical presentations of this result either follow the structure of Linial's original proof~\cite{peleg00distributed,barenboim13distributed,wattenhofer13lecture}, or rely on some prior knowledge of Ramsey's theorem~\cite{suomela-ddabook}.

In this work we give a simpler, self-contained version of Linial's proof. This version of the proof is easy to explain to a student on a whiteboard in fifteen minutes. We do not need to refer to neighbourhood graphs, line graphs, and chromatic numbers.

\section{Problem Formulation}

Fix a natural number $n$. We are interested in deterministic distributed algorithms that find a proper $3$-colouring of any directed $n$-cycle. The nodes are labelled with unique identifiers from the set $\{1,2,\dotsc,n\}$. Each node must pick its own colour from the set $\{1,2,3\}$.

If we have a distributed algorithm with a running time of $T$ communication rounds, then each node has to pick its own colour based on the information that is available within distance $T$ from it; see Figure~\ref{fig:A}. Moreover, two nodes that are adjacent to each other must pick different colours. Hence the algorithm is a function $A$ with $2T+1$ arguments that satisfies
\begin{align*}
    A(x_1, x_2, \dotsc, x_{2T+1}) &\in \{1,2,3\}, \\
    A(x_1, x_2, \dotsc, x_{2T+1}) &\ne A(x_2, x_3, \dotsc, x_{2T+2})
\end{align*}
whenever $x_1, x_2, \dotsc, x_{2T+2}$ are distinct identifiers from the set $\{1,2,\dotsc,n\}$.

Function $\log^* x$ is the iterated logarithm of $x$, defined as follows:
$\log^* x = 0$ if $x \le 1$, and
$\log^* x = 1 + \log^* \log_2 x$ otherwise.
Linial's famous result shows that no matter which algorithm $A$ we pick, we must have
\begin{equation}
    T \ge \frac{1}{2}\log^*(n) - 1. \label{eq:main}
\end{equation}
We will now give a simple proof of this theorem.

\section{Colouring Functions}

The only concept that we need is a \emph{colouring function}. We say that $A$ is a $k$-ary $c$-colouring function if
\begin{align}
    A(x_1, x_2, \dotsc, x_k) &\in \{1,2,\dotsc,c\} & \text{for all } & 1 \le x_1 < x_2 < \dotso < x_k \le n, \label{eq:col1} \\
    A(x_1, x_2, \dotsc, x_k) &\ne A(x_2, x_3, \dotsc, x_{k+1}) & \text{for all } & 1 \le x_1 < x_2 < \dotso < x_{k+1} \le n. \label{eq:col2}
\end{align}
Any deterministic distributed algorithm $A$ that finds a proper $3$-colouring of an $n$-cycle defines a $k$-ary $3$-colouring function for $k = 2T+1$ (the converse is not necessarily true).

We will show that $k + 1 \ge \log^* n$ for any $k$-ary $3$-colouring function. By plugging in $k = 2T+1$, we obtain the main result~\eqref{eq:main}.

\section{Proof}

The proof is by induction; the base case is trivial. If a colouring function only sees $1$ identifier, it cannot do much.

\begin{lemma}\label{lem:base}
    If $A$ is a $1$-ary $c$-colouring function, we have $c \ge n$.
\end{lemma}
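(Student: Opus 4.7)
The plan is to unpack the definition of a $1$-ary $c$-colouring function and observe that the non-equality constraint~\eqref{eq:col2} specialised to $k=1$ forces the function to be injective. Concretely, for any two identifiers $x_1 < x_2$ in $\{1,\dotsc,n\}$ the definition requires $A(x_1) \neq A(x_2)$, so the $n$ values $A(1), A(2), \dotsc, A(n)$ are pairwise distinct.

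Next, I would combine this with the codomain constraint~\eqref{eq:col1}, which says all these values lie in $\{1,2,\dotsc,c\}$. Since we have $n$ distinct elements inside a set of size $c$, a pigeonhole count immediately yields $c \ge n$.

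There is essentially no obstacle here: the statement is the trivial base case, and the only thing to do is to carefully translate the two defining conditions of a colouring function into the statement that $A$ is an injection from an $n$-element set into a $c$-element set. I would keep the write-up to two or three sentences to emphasise triviality, so that the reader is primed to focus attention on the forthcoming inductive step.
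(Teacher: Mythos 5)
Your proposal is correct and matches the paper's argument: both reduce the lemma to the observation that condition \eqref{eq:col2} with $k=1$ makes $A$ injective on $\{1,\dotsc,n\}$, and then apply a pigeonhole count against the codomain $\{1,\dotsc,c\}$ (the paper simply states this in contrapositive form, deriving a collision from $c<n$). No gap; the two write-ups differ only in phrasing.
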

\begin{proof}
    If $c < n$, by the pigeonhole principle there are some $x_1 < x_2$ with $A(x_1) = A(x_2)$, which contradicts \eqref{eq:col2}.
\end{proof}

The key part of the proof is the inductive step. Given any colouring function~$A$, we can always construct another colouring function~$B$ that is ``faster'' (smaller number of arguments) but ``worse'' (larger number of colours). Here it is crucial that colouring functions are well-defined for both odd and even values of~$k$.

\begin{lemma}\label{lem:ind}
    If $A$ is a $k$-ary $c$-colouring function, we can construct a $(k-1)$-ary $2^c$-colouring function~$B$.
\end{lemma}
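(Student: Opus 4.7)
The plan is to encode, in the value of $B$ on a $(k-1)$-tuple, enough information about $A$ to force two consecutive windows to disagree. The natural encoding is to look at $A$'s output over all legal completions of the tuple: define
\[
    B(x_1, x_2, \dotsc, x_{k-1}) = \bigl\{\, A(x_1, x_2, \dotsc, x_{k-1}, y) : x_{k-1} < y \le n \,\bigr\} \subseteq \{1, 2, \dotsc, c\}.
\]
Since the codomain is the power set of $\{1,\dotsc,c\}$, which has $2^c$ elements, after fixing any bijection with $\{1,\dotsc,2^c\}$ the function $B$ satisfies~\eqref{eq:col1} with $c$ replaced by $2^c$.

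The content of the lemma is therefore to verify~\eqref{eq:col2}: for any $1 \le x_1 < x_2 < \dotso < x_k \le n$, I must show that the sets $S := B(x_1, \dotsc, x_{k-1})$ and $T := B(x_2, \dotsc, x_k)$ are distinct. The key observation is that the value $v := A(x_1, x_2, \dotsc, x_k)$ witnesses the difference. On the one hand, $v \in S$ because $x_k$ is a valid choice of $y$ in the definition of $S$ (we have $y = x_k > x_{k-1}$). On the other hand, $v \notin T$: any element of $T$ has the form $A(x_2, \dotsc, x_k, y)$ with $y > x_k$, and property~\eqref{eq:col2} of $A$ applied to the increasing sequence $x_1 < x_2 < \dotso < x_k < y$ gives $A(x_1, \dotsc, x_k) \ne A(x_2, \dotsc, x_k, y)$.

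There is no real obstacle beyond finding the right encoding; the verification is a one-line application of~\eqref{eq:col2} for~$A$. One minor point to mention for completeness is the boundary case $x_{k-1} = n$ (or $x_k = n$), where the set of valid completions $y$ is empty and $B$ simply outputs $\emptyset$; this is a legitimate value in the codomain and plays no role in the check above, since the hypothesis $x_1 < \dotso < x_k \le n$ already ensures $x_{k-1} < n$ so that the completion $y = x_k$ used in the argument exists.
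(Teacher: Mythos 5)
Your proof is correct and follows essentially the same route as the paper: the same definition of $B$ as the set of values $A(x_1,\dotsc,x_{k-1},y)$ over completions $y>x_{k-1}$, with the value $A(x_1,\dotsc,x_k)$ serving as the witness that consecutive windows get different sets. The only cosmetic difference is that you argue distinctness directly (the witness lies in the first set but not the second), whereas the paper phrases the same observation as a proof by contradiction.
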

\begin{proof}
    We define $B$ as follows:
    \[
        B(x_1, x_2, \dotsc, x_{k-1}) = \bigl\{ A(x_1, x_2, \dotsc, x_{k-1}, x_k) \,:\, x_k > x_{k-1} \bigr\}.
    \]
    There are only $2^c$ possible values of $B$: all possible subsets of $\{1,2,\dotsc,c\}$. These can be represented as integers $\{1,2,\dotsc,2^c\}$, and hence \eqref{eq:col1} holds.

    The interesting part is \eqref{eq:col2}. Let $1 \le x_1 < x_2 < \dotso < x_k \le n$. By way of contradiction, suppose that
    \begin{equation}
        B(x_1, x_2, \dotsc, x_{k-1}) = B(x_2, x_3, \dotsc, x_k). \label{eq:contr}
    \end{equation}
    Let
    \[
        \alpha = A(x_1, x_2, \dotsc, x_k).
    \]
    From the definition of $B$ we have
    $\alpha \in B(x_1, x_2, \dotsc, x_{k-1})$.
    By assumption \eqref{eq:contr}, this implies
    $\alpha \in B(x_2, x_3, \dotsc, x_k)$.
    But then we must have some $x_k < x_{k+1} \le n$ such that
    \[
        \alpha = A(x_2, x_3, \dotsc, x_{k+1}).
    \]
    That is, $A$ cannot be a colouring function.
\end{proof}

To complete the proof, we will need power towers. Define
\[
    {}^i 2 = 2^{2^{\cdot^{\cdot^2}}}
\]
with $i$ twos in the power tower. For example, ${}^2 2 = 4$ and ${}^3 2 = 16$. Now assume that $A_1$ is a $k$-ary $3$-colouring function. Certainly it is also a $k$-ary ${}^2 2$-colouring function. We can apply Lemma~\ref{lem:ind} iteratively to obtain
\begin{itemize}[noitemsep]
    \item a $(k-1)$-ary ${}^3 2$-colouring function $A_2$,
    \item a $(k-2)$-ary ${}^4 2$-colouring function $A_3$, \\ \ldots
    \item a $1$-ary ${}^{k+1} 2$-colouring function $A_k$.
\end{itemize}
By Lemma~\ref{lem:base}, we must have ${}^{k+1} 2 \ge n$, which implies $k + 1 \ge \log^* n$.

\begin{figure}[b]
    \centering
    \includegraphics[page=1]{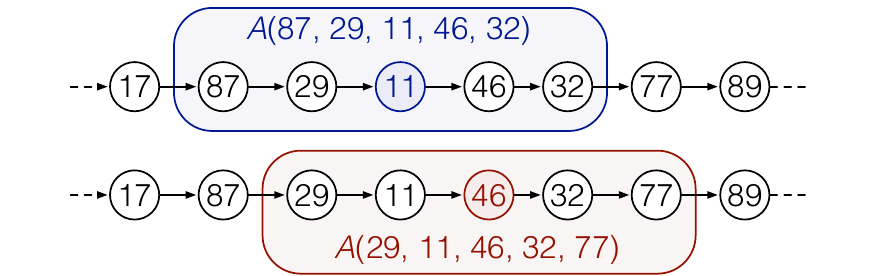}
    \caption{Colouring directed cycles in time $T = 2$. For example, the output of node $11$ only depends on its radius-$T$ neighbourhood, $(87,29,11,46,32)$. We can interpret algorithm $A$ as a $k$-ary function, $k = 2T+1 = 5$, that maps each local neighbourhood to a colour. As it is possible that adjacent nodes have neighbourhoods $(87,29,11,46,32)$ and $(29,11,46,32,77)$, function $A$ must satisfy $A(87,29,11,46,32) \ne A(29,11,46,32,77)$.}\label{fig:A}
\end{figure}

\section*{Acknowledgements}

This work was supported in part by the Academy of Finland, Grant 252018, and by the Research Funds of the University of Helsinki. Many thanks to all students who have kindly served as guinea pigs.

\def\UrlFont{\sf\footnotesize}
\setlength{\bibsep}{3pt}
\bibliographystyle{plainnat}
\bibliography{linial-easy}

\end{document}